\documentclass[3p]{elsarticle}
\usepackage{amssymb,amsmath,amsfonts,amsthm,dsfont}
\usepackage[normalem]{ulem}
\usepackage{bbm}
\usepackage{bm}
\usepackage{physics}
\usepackage{graphicx,wasysym}
\usepackage{xcolor}
\usepackage{float}
\usepackage{mathtools}
\newcommand{\sgn}[1]{\text{sgn}\left\lbrack#1\right\rbrack}
\newcommand{\testx}{\tilde{\mathbf{x}}}
\newcommand{\x}{\mathbf{x}}
\newcommand{\y}{\mathbf{y}}
\newcommand{\V}{\mathcal{V}}
\newcommand{\Zal}{\mathcal{Z}_{al}}

\newtheorem{definition}{Definition}
\newtheorem{proposition}{Proposition}
\newtheorem{lemma}{Lemma}

\newcommand{\RR}{\mathbbm{R}}
\newcommand{\NN}{\mathbbm{N}}

\newcommand{\CC}{\mathbbm{C}}
\newcommand{\FF}{\mathbbm{F}}

\biboptions{sort&compress}

%%%=====Code for removing the footnote "Preprint submitted to Elsevier". -- Could be used when posting on arXiv.
\makeatletter
\def\ps@pprintTitle{%
  \let\@oddhead\@empty
  \let\@evenhead\@empty
  \def\@oddfoot{\reset@font\hfil\thepage\hfil}
  \let\@evenfoot\@oddfoot
}
\makeatother
%%================================================================
\begin{document}

\title{The theory of the quantum kernel-based binary classifier}

\author[1,2]{Daniel K. Park\fnref{fn1}}
\ead{dkp.quantum@gmail.com}

\author[3]{Carsten Blank\fnref{fn1}} \ead{blank@data-cybernetics.com}

\author[4,5]{Francesco Petruccione} \ead{petruccione@ukzn.ac.za}

\address[1]{School of Electrical Engineering, KAIST, Daejeon, 34141, Republic of Korea}
\address[2]{ITRC of Quantum Computing for AI, KAIST, Daejeon, 34141, Republic of Korea}
\address[3]{Data Cybernetics, 86899 Landsberg, Germany}
\address[4]{Quantum Research Group, School of Chemistry and Physics, University of KwaZulu-Natal, Durban, KwaZulu-Natal, 4001, South Africa}
\address[5]{National Institute for Theoretical Physics (NITheP), KwaZulu-Natal, 4001, South Africa}

\fntext[fn1]{These authors contributed equally to this work.}

\begin{abstract}
Binary classification is a fundamental problem in machine learning. Recent development of quantum similarity-based binary classifiers and kernel method that exploit quantum interference and feature quantum Hilbert space opened up tremendous opportunities for quantum-enhanced machine learning. To lay the fundamental ground for its further advancement, this work extends the general theory of quantum kernel-based classifiers. Existing quantum kernel-based classifiers are compared and the connection among them is analyzed. Focusing on the squared overlap between quantum states as a similarity measure, the essential and minimal ingredients for the quantum binary classification are examined. The classifier is also extended concerning various aspects, such as data type, measurement, and ensemble learning. The validity of the Hilbert-Schmidt inner product, which becomes the squared overlap for pure states, as a positive definite and symmetric kernel is explicitly shown, thereby connecting the quantum binary classifier and kernel methods.
\end{abstract}

\begin{keyword}
quantum computing\sep quantum machine learning\sep pattern recognition\sep kernel methods\sep quantum binary classification
\end{keyword}

\maketitle
\def\one{{\mathchoice {\rm 1\mskip-4mu l} {\rm 1\mskip-4mu l} {\rm \mskip-4.5mu l} {\rm 1\mskip-5mu l}}}
% ====================================================================
\section{Introduction}
% ====================================================================

Quantum properties such as its representation in exponentially large Hilbert space and quantum interference provide compelling opportunities for quantum-enhancement in various domains of computing, communication and sensing. Quantum algorithms and quantum error correcting codes promise quantum advantages in certain computational tasks over existing classical methods. Nevertheless, demonstration of the quantum advantage in practical and industrial applications remains on the agenda.

A discipline for which quantum techniques are expected to be beneficial is machine learning~\cite{wittek,doi:10.1080/00107514.2014.964942,QML-Biamonte,SupervisedQML,Dunjko_2018,doi:10.1098/rspa.2017.0551,Preskill2018quantumcomputingin}. Machine learning has garnered much attention recently due to the rapid accumulation of data and the demand for efficient ways to extract useful information. The emergence of quantum machine learning, which bridges quantum computing and machine learning, appears to be natural for several instinctive reasons. For example, quantum computing can provide rich data representation in exponentially large quantum state space and efficiently perform matrix operations in this high-dimensional vector space as machine learning tasks often require~\cite{PhysRevLett.103.150502_HHL_qBLAS}. Also, since quantum physics can generate patterns with the properties, such as superposition and entanglement, that cannot be described with classical physics, quantum theory may also enable pattern recognition beyond classical capabilities~\cite{Romero_2017}. Moreover, the theory of quantum state discrimination~\cite{Helstrom1969,Barnett:09,Bae_2015} which has been studied both intensively and extensively may find its applications in machine learning problems, such as classification and decision making.

Pattern recognition is a fundamental problem in machine learning with broad applications. In pattern analysis, the kernel method has been regarded as an eminent tool for identifying non-linear relationships in data. It uses a kernel function, i.e. a similarity measure of data, that is associated with reproducing kernel Hilbert space in which the kernel can be evaluated by taking an inner product~\cite{Scholkopf:2000:KTD:3008751.3008793,hofmann2008}. Quantum computers are expected to improve existing classical kernel-based machine learning methods for their ability to efficiently access and manipulate data in large quantum feature spaces, which is classically intractable. Indeed, one of the earliest quantum machine learning algorithms developed to provide exponential speedup in certain cases is the quantum support vector machine~\cite{PhysRevLett.113.130503_qSQVM}, a supervised machine learning algorithm based on the kernel method. More recent work in Ref.~\cite{PhysRevLett.122.040504} established theoretical foundations of the kernel method for quantum machine learning. 

Binary classification is an example of pattern recognition. The goal of this task can be described as, given a complex-valued labelled data set $\mathcal{D} = \left\{ (\x_1, y_1), \ldots, (\x_M, y_M) \right\} \subset \mathbb{C}^N\times\{0,1\}$, finding the most likely class label of an unseen data point $\testx \in \mathbb{C}^N$. Here, the data is assumed to be complex-valued, rather than real-valued as in usual machine learning tasks, to utilize the quantum Hilbert space to the full extent. Supervised learning methods that exploit quantum Hilbert space to represent the feature space are proposed and verified in Refs.~\cite{Havlicek2019,2019arXiv190610467S}, advocating potential quantum speedups in machine learning with these approaches. Quantum binary classifiers based on evaluating the sum of many kernel functions with only a constant number of runs by exploiting quantum interference are introduced in Refs.~\cite{QML_Maria_Francesco,2019arXiv190902611B}. Ref.~\cite{2019arXiv190902611B} describes the quantum circuit construction for tailoring the weights and exponents of kernel functions defined by the quantum state fidelity of pure states.

This paper aims to present an up-to-date snapshot of the fundamental research in quantum kernel-based classification and the understanding of its current benefits and limitations. For such purposes, we review and compare several known quantum similarity (or kernel)-based binary classifiers. Then we extend the general theory of quantum kernel-based classifiers in particular for the swap-test classifier introduced in Ref.~\cite{2019arXiv190902611B} that uses the quantum fidelity between two pure states as the similarity measure. We report relevant and important remarks that are not discussed in previous works to lay the ground and foster further research. The remainder of the paper is organized as follows. Section~\ref{sec:review} provides a review of the quantum kernel-based classification algorithms and discusses their relations. Section~\ref{sec:disect} examines the swap-test classifier to extract essential ingredients for the algorithm and to suggest a minimal quantum circuit implementation. Section~\ref{sec:general} discusses the generalization of the swap-test classifier to density matrix formalism, ensemble learning and single-shot measurement. In Sec.~\ref{sec:KernelMethods}, the properties of the Hilbert-Schmidt inner product, which becomes the squared overlap or the quantum state fidelity for pure states, in relation to the kernel theory are investigated and its validity as a kernel function is discussed. Conclusion and suggestions for future work are provided in Sec.~\ref{sec:conclusion}.
 
% ====================================================================
\section{Quantum kernel-based Classification}
\label{sec:review}
% ====================================================================

In this section, we review and discuss three known quantum classification algorithms, namely the quantum support vector machine (qSVM)~\cite{PhysRevLett.113.130503_qSQVM}, the Hadamard classifier (HC)~\cite{QML_Maria_Francesco} and the swap-test classifier (STC)~\cite{2019arXiv190902611B}. 
%All three can be used for similarity classification alone or in conjunction with a support-vector-machine (SVM). 
There is a common and underlying agreement of the encoding of data that they adhere to: a vector $\x = (x_1, \ldots, x_N)^T \in \CC^N$ is encoded in a quantum state
\begin{equation}
\label{eq:amplitude_encoding}
    \ket{\x} := \frac{1}{\|\x\|} \sum_{i=1}^{N} x_i \ket{i},
\end{equation}
which is also known as the \textit{amplitude encoding} of data~\cite{QML_Maria_Francesco}. 
% There exists many state preparation algorithms that can be utilized for such encoding~\cite{PhysRevA.97.052329,PhysRevA.64.014303,PhysRevA.73.012307,Mosca:01,2002quant.ph.8112G,Mottonen:2005:TQS:2011670.2011675}. For example, the above state can be prepared using $O(N)$ hardware resources and $O(\log^2(N))$ operations~\cite{PhysRevLett.100.160501,1367-2630-17-12-123010} or with $O(\log(N))$ hardware resources and $O(N)$ operations~\cite{ffqram}. Another interesting direction is to consider intrinsic quantum data that have to be classified coherently, for example in quantum chemistry~\cite{QML_Maria_Francesco}.

Throughout the paper, we take $N = 2^n$ for some integer $n$ without loss of generality, i.e., the data set is encoded in $n$ qubits. We reserve $\|\cdot\|$ for the $l_2$-norm of a vector, $\braket{\cdot}{\cdot}$ for the inner product of two pure quantum states, and $\langle \cdot, \cdot \rangle$ for the inner product of an arbitrary complex-valued Hilbert space, which is linear in the second argument and semi-linear in the first.

\subsection{Quantum Support Vector Machine}
As the support vector machine is a commonly used algorithm in machine learning we want to recall that $f(\x) = \langle \mathbf{w}, \x \rangle + b$ is the linear regression function for two vectors $\mathbf{w}$ and $\x$ in some Hilbert space $\mathcal{H}$ and a bias $b\in\mathbb{R}$. The sign of the regression function is then the binary classifier $c(x) = \sgn{f(x)}$. Given a data set $\mathcal{D}$, the question arises if a weight vector $\mathbf{w}$ can be found such that the two classes are separated optimally. The theory is standard in machine learning (see, e.g., Ref.~\cite{hofmann2008, shawetaylor2004kernel, Scholkopf:2000:KTD:3008751.3008793, scholkopf2001learning, hofmann2008_kernel_methods_ml}). There are two equivalent minimization formulations, the primal and dual. In the latter case --- the much more frequently used --- one maximizes a Lagrangian 
\begin{equation}
    \mathcal{L}(\mathbf{a}, \mathbf{w}) = \frac{1}{2} \| \mathbf{w} \|^2 - \sum_i a_i \left( y_i \left( \langle \mathbf{w}, \x_i \rangle + b \right) - 1 \right)
\end{equation}
with respect to the Lagrange multipliers $a_i \geq 0$ and applies the Karush-Kuhn-Tucker (KKT) conditions, which are the necessary and sufficient conditions for a maximum in this convex optimization problem. The KKT conditions then state that the weight vector can be represented as $\mathbf{w} = \sum_i a_i l_i \x_i$ with $l_i = (-1)^{y_i}$ and $\mathbf{a}^\top\mathbf{l} = 0$. It follows that the Lagrangian can be written as
\begin{equation}
    \label{eq:lagrangian}
    \mathcal{L}(\mathbf{a}) = \sum_i a_i - \frac{1}{2} \sum_{i,j} a_j l_j \langle \x_i, \x_j \rangle a_i l_i.
\end{equation}
%in which case the the Lagrange multipliers are maximized. 
The solutions to the maximization problem denoted by $a^*_i$ are now part of the regression function, which becomes
\begin{equation}
\label{eq:svm_regression_fnc}
    f(\x) = \sum_j a^*_j l_j \langle \x_j, \x \rangle + b = \sum_j (-1)^{y_j} a^*_j \langle \x_j, \x \rangle + b.
\end{equation}
The bias $b$ can be recovered by finding one index $s\in\{1, \ldots, M\}$ for which $a^*_s>0$ (which are called support vectors) and then applying the formula $b = y_s - \langle \mathbf{w}, \x_s \rangle$. We will from now on drop the $*$ from $a^*_i$ for the optimal Lagrange multiplier. We note that there are different formulations of the above, but the one that has been applied in Ref.~\cite{PhysRevLett.113.130503_qSQVM} uses modified multipliers $\alpha_i = a_i l_i$.

The qSVM algorithm introduced in Ref.~\cite{PhysRevLett.113.130503_qSQVM} makes a least-squares approximation of the problem~\cite{Suykens1999} and employs the density matrix exponentiation~\cite{qPCA} and the quantum matrix inversion algorithm~\cite{PhysRevLett.103.150502} to produce a quantum state $\ket{b,\bm{\alpha}}=\left(b\ket{0}+\sum_{m=1}^M\alpha_m\ket{m}\right)/\sqrt{C}$, where $\sqrt{C}$ is the normalization constant. Given $\ket{\x_m}$ for $m=1,\ldots ,M$ and $\ket{\tilde\x}$ representing the training data and the test datum, respectively, and $\ket{b,\bm{\alpha}}$, the qSVM classifier is defined by the oracle state
\begin{equation}
    \ket{\Phi} = \frac{1}{\sqrt{2}} \left( \ket{0} \ket{\tilde{u}} + \ket{1}\ket{\tilde{x}} \right), \label{eq:qsvm_initial_state}
\end{equation}
where the first register is an ancilla qubit and
\begin{align}
    \ket{\tilde{u}} &:= N_{\tilde{u}}^{-\frac{1}{2}} \left( b \ket{0}\ket{0} + \sum_{m=1}^M \alpha_m \|\x_m\| \ket{m} \ket{\x_m} \right), \label{eq:qsvm_training_data}\\
    \ket{\tilde{x}} &:= N_{\tilde{x}}^{-\frac{1}{2}} \left( \ket{0}\ket{0} + \sum_{m=1}^M \|\testx\| \ket{m} \ket{\testx} \right) \label{eq:qsvm_test_data}
\end{align}
with $N_{\tilde{u}} = |b|^2 + \sum_m |\alpha_m|^2 \|\x_m\|^2$ and $N_{\tilde{x}} = M \|\tilde\x\|^2 + 1$. Applying a Hadamard gate on the first (ancilla) qubit of $\ket{\Phi}$ and measuring the ancilla in the state $1$ yields the probability $\Pr(a=1) = \left( 1 - \Re \braket{\tilde{u}}{\tilde{x}} \right)/2$, where
\begin{equation*}
    \braket{\tilde{u}}{\tilde{x}} = (N_{\tilde{u}}N_{\tilde{x}})^{-\frac{1}{2}} \left( b + \sum_{m=1}^M \alpha_m \|\x_m\| \|\testx\| \braket{\x_m}{\testx} \right).
\end{equation*}
Following the methods of Ref.~\cite{2019arXiv190902611B}, one actually finds that measuring the expectation value of $\sigma_z$ on the ancilla gives
\begin{equation}
    \bra{\Phi}H_a \sigma_z^{(a)} H_a \ket{\Phi} = \Re \braket{\tilde{u}}{\tilde{x}},
\end{equation}
and the classification is done by the sign function of the expectation value, i.e., $c(\testx) = \sgn{\Re \braket{\tilde{u}}{\tilde{x}}}$.

The proof-of-principle of the qSVM was demonstrated with a four-qubit nuclear spin quantum processor for classifying handwritten characters in Ref.~\cite{PhysRevLett.114.140504}.

\subsection{Hadamard classifier}
The Hadamard classifier~\cite{QML_Maria_Francesco} is an interesting alternative to the qSVM classifier that bypasses costly subroutines such as the density matrix exponentiation and the quantum matrix inversion while it adds a separate qubit for the class label and uses a different measurement scheme. The main interest of this work was to design a quantum classifier that can be realized by a minimum quantum circuit. This classifier demonstrated that a simple quantum circuit consisting of a Hadamard gate and a post-selection measurement scheme can realize binary similarity-based classification. Only during recent research it was shown that the post-selection measurement can be replaced by an expectation value measurement of a two-qubit observable and that training and test data can be bestowed with weights so that it actually closely resembles the qSVM classifier~\cite{2019arXiv190902611B}. The proof-of-principle of the HC was demonstrated with a publicly available five-qubit quantum computer provided by the IBM Quantum Experience for a simplified supervised pattern recognition task based on the famous
Iris flower~\cite{QML_Maria_Francesco}.

We now put this work into perspective and explain the general procedure. An initial state is created with weights $a_m \geq 0$
\begin{equation}
\label{eq:hc_initial_state}
    \ket{\Phi_h} = \frac{1}{\sqrt{2}} \left( \ket{0} \ket{\tilde{u}_h} + \ket{1}\ket{\tilde{x}_h} \right)
\end{equation}
where the first register is an ancilla qubit and
\begin{align}
    \ket{\tilde{u}_h} &:= N_{\tilde{u}_h}^{-\frac{1}{2}} \left( \sum_{m=1}^M \sqrt{a_m} \|\x_m\| \ket{m} \ket{\x_m} \ket{y_m} \right), \label{eq:hc_training_data}\\
    \ket{\tilde{x}_h} &:= N_{\tilde{x}_h}^{-\frac{1}{2}} \left( \sum_{m=1}^M \sqrt{a_m} \|\testx\| \ket{m} \ket{\testx} \ket{y_m} \right) \label{eq:hc_test_data}
\end{align}
and the normalizing constants are given by $N_{\tilde{u}_h} = \sum_m a_m \|\x_m\|^2$ and $N_{\tilde{x}_h} = \|\testx\| \sum_m a_m$. The last register is the label qubit that indicates the class label $y_m\in\lbrace 0,1\rbrace$ of an $m$th training data. In the original works of the HC and the STC, the data vectors are normalized to unit length, i.e., $\|\x_m\|= \|\testx\|=1$. However, we explicitly write these terms in this section to make the comparison to the qSVM more clearly. Moreover, when the HC was first introduced, the weights were all equal, i.e., $a_m=1/M\;\forall\; m=1,\ldots,M$. Here we allow for non-uniform weights so as to make a direct connection to the qSVM. As in the qSVM classifier, the HC classifier also applies a Hadamard gate on the ancilla qubit but then performs a two-qubit measurement on the ancilla and the label qubit. For brevity, we use the notation $\Zal = \sigma_z^{(a)}\sigma_z^{(l)}$ throughout this paper, where the superscript $a$ ($l$) indicates that the operator is acting on the ancilla (label) qubit . The resulting expectation value is given by
\begin{align}
     \bra{\Phi_h}H_a \Zal H_a \ket{\Phi_h} &= \Re \bra{\tilde{u}_h} \sigma_z^{(l)} \ket{\tilde{x}_h} \nonumber \\
        &= (N_{\tilde{u}_h} N_{\tilde{x}_h})^{-\frac{1}{2}} \sum_m (-1)^{y_m} a_m \|\testx\| \|\x_m\| \Re \braket{\x_m}{\testx}.
\end{align}

The qSVM classifier has a bias $b\in\RR$ in its description. The bias can be recovered in the Hadamard classifier by encoding $b$ in a state orthogonal to index state vectors as
\begin{align}
    \ket{\tilde{u}'_h} &= N_{\tilde{u}'_h}^{-\frac{1}{2}} \left( \sqrt{b}\ket{0} \ket{0} \ket{y_b} + \sum_{m=1}^M \sqrt{a_m} \|\x_m\| \ket{m} \ket{\x_m} \ket{y_m} \right), \\
    \ket{\tilde{x}'_h} &= N_{\tilde{x}'_h}^{-\frac{1}{2}} \left( \sqrt{b}\ket{0} \ket{0} \ket{y_b} + \sum_{m=1}^M \sqrt{a_m} \|\testx\| \ket{m} \ket{\testx} \ket{y_m} \right)
\end{align}
with constants $N_{\tilde{u}'_h} = |b| + N_{\tilde{u}_h},\; N_{\tilde{x}'_h} = |b| + N_{\tilde{x}_h}$ and $y_b = (1 - \text{sgn}(b))/2$. Then we would indeed find for $\alpha_m = (-1)^{y_m} a_m$
\begin{equation}
     \langle \Zal \rangle = ( N_{\tilde{u}'_h} N_{\tilde{x}'_h} )^{-\frac{1}{2}} \left( b + \sum_m \alpha_m \|\testx\|\, \|\x_m\| \Re \braket{\x_m}{\testx} \right).
\end{equation}

Interesting to note is the striking similarity of both classifiers. However, the qSVM classifier needs one less qubit. What seems at first as redundant opened up a new application and resulted in the discovery of the swap-test classifier~\cite{2019arXiv190902611B}. As we show in Sec.~\ref{sec:disect}, a separate label register is necessary to enable the use of the squared state overlap (i.e. $|\braket{\x}{\testx}|^2$) as a similarity measure to distinguish class labels. The squared state overlap gives a solution to a major drawback of both previous classifiers. The qSVM and the HC are based on estimating only the real part of the inner product even though both works in Refs.~\cite{PhysRevLett.113.130503_qSQVM, QML_Maria_Francesco} mention the power of kernel methods with quantum feature maps. If however, quantum feature maps are to be utilized in their full power, all of the complex space must be accessible to the classifier.

\subsection{Swap-test Classifier}
The swap-test classifier (STC) starts with a product state consisting of a test state $\ket{\testx}$ and a training state (data, label and index) $\ket{\tilde u_s} = N_{\tilde u_s}^{-\frac{1}{2}} \sum_{m=1}^M \sqrt{a_m} \|\x_m\| \ket{\x_m} \ket{y_m}\ket{m}$ with $N_{\tilde u_s} = \sum_m a_m \|\x_m\|^2$. Then a swap-test is done on the test and training data qubits with the help of an ancilla qubit. A two-qubit measurement is applied to the ancilla and label qubits. The setup is summarized in the circuit shown in Fig.~\ref{fig:1}.
\begin{figure}[t]
    \centering
    \includegraphics[width=0.5\columnwidth]{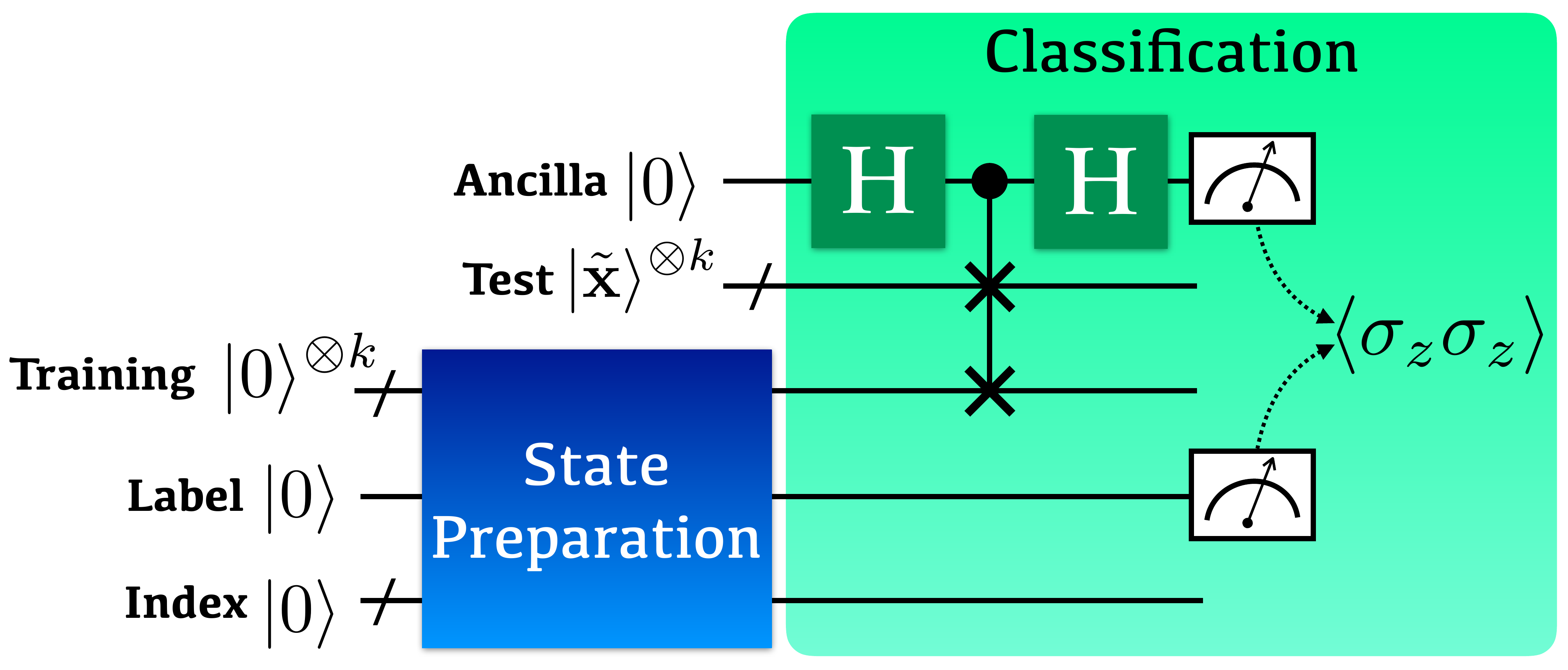}
    \caption{The quantum circuit describing the swap-test classifier.}
    \label{fig:1}
\end{figure}
Then the expectation value is
\begin{equation}
    \langle \Zal \rangle = N_{\tilde u_s}^{-\frac{1}{2}} \sum_{m=1}^M (-1)^{y_m} a_m \|\x_m\|^{2k} \left| \braket{\x_m}{\testx}\right|^{2k}.
\end{equation}
The power $k > 0$ can be achieved by providing $k$ identical copies of the test and training data.

As in the case of the HC we also want to recover the SVM formulation. For this we need to encode the bias $b \in \RR$. We modify the training state $\ket{\tilde u_s}$ to be
\begin{equation}
    \ket{\tilde u'_s} = N_{\tilde u'_s}^{-\frac{1}{2}} \left( \sqrt{b}  \ket{\testx}^{\otimes k}\ket{y_b}\ket{0} + \sum_{m=1}^M \sqrt{a_m} \|\x_m\|^{k}  \ket{\x_m}^{\otimes k} \ket{y_m}\ket{m} \right).
\end{equation}
with $N_{\tilde u'_s} = |b| + \sum_m a_m \|\x_m\|^{2k}$. Then the expectation value of the STC measurement imitates the regression function of Eq.~(\ref{eq:svm_regression_fnc}):
\begin{equation}
    \langle \Zal \rangle = N_{\tilde u'_s}^{-\frac{1}{2}} \left( b + \sum_{m=1}^M (-1)^{y_m} a_m \|\x_m\|^{2k} \left| \braket{\x_m}{\testx}\right|^{2k} \right).
\end{equation}

The STC can be recognized as an algorithm that outputs an expectation value of the Helstrom operator, $p_0\rho_0-p_1\rho_1$, by defining $\rho_i=\sum_{m|y_m=i}(a_m/p_i)\ketbra{\x_m}{\x_m}^{\otimes k}$, where $\sum_{m|y_m=i} a_m/p_i=1$ and $p_0+p_1=1$~\cite{2019arXiv190902611B}. Although the Helstrom operator is the basis from which the optimal binary state discrimination strategy is derived~\cite{Helstrom1969}, the STC does not exactly perform the Helstrom measurement since the Helstrom measurement requires the projection on to positive and negative eigenspaces of the Helstrom operator while the STC acts like a black box that merely outputs the expectation value. Nevertheless, the connection between the Helstrom operator and the STC stimulates further research on applying the mature field of the quantum state discrimination to quantum classification problems. Indeed, the Helstrom measurement has been adapted in the development of a classical quantum-inspired binary classification algorithm~\cite{cagliari.hqc}.

% When the test data, training data, and corresponding labels are encoded in separate registers as a product state, the number of elementary gates needed in the swap-test classifier circuit increases linearly with the number of training data ($M$) and logarithmically with the dimension of the data ($N$)~\cite{2019arXiv190902611B}. Thus it is suggested in Ref.~\cite{2019arXiv190902611B} that the STC can be practically useful for machine learning problems that involve a small number of training data but large feature space. The analysis therein also estimates that a quantum device with an order of 100 qubits and with an error rate of a Toffoli or a controlled-NOT gate applied to an arbitrary set of qubits being less than $10^{-3}$ can implement interesting quantum binary classification tasks. Although currently available quantum devices do not satisfy the above technical requirement, due to encouragingly rapid improvements in quantum hardware~\cite{Google_QS,ion_trap_BM}, one can expect to perform interesting machine learning tasks using the STC in the near future.

% ====================================================================
\section{Dissecting the swap-test classifier}
\label{sec:disect}
% ====================================================================
\subsection{Classification without index}
\label{sec:without_ind}
Recall that the swap-test classifier introduced in Ref.~\cite{2019arXiv190902611B} begins with an input quantum state\footnote{In Ref.~\cite{2019arXiv190902611B}, the weight is denoted as $w_m$. Here, we use $a_m$ to avoid confusion with the convention for the weight vector $\mathbf{w}$ in SVM.}
\begin{equation}
\label{eq:swap_test_in}
\sum_{m=1}^M\sqrt{a_m}\ket{0}\ket{\tilde{\x}}^{\otimes k}\ket{\x_m}^{\otimes k}\ket{y_m}\ket{m},
\end{equation}
where the first qubit is an ancilla for the swap-test, $k$ copies of test and training data are encoded in the quantum state $\ket{\tilde{\x}}$ and $\ket{\x_m}$ by amplitude encoding, $m$ is an index for training data and $y_m\in\lbrace 0,1 \rbrace$ represents the class label.
The next step of the classifier is to apply
\begin{equation}
\mathcal{V}=H_a\cdot\prod_{i=1}^k\text{c-}\texttt{swap}(t_i,d_i\vert a=1)\cdot H_a,
\end{equation}
where $H_a$ represents a Hadamard gate applied to the ancilla qubit and $\text{c-}\texttt{swap}(t_i,d_i\vert a)$ represents a controlled-swap gate that exchanges an $i$th copy of test ($t_i$) and training ($d_i$) data if the ancilla qubit state is $a$.
This results in the final state
\begin{equation}
\label{eq:swap_test_out}
\sum_{m=1}^M\frac{\sqrt{a_m}}{2}(\ket{0}\ket{\psi_{k+}}+\ket{1}\ket{\psi_{k-}})\ket{y_m}\ket{m},
\end{equation}
where $\ket{\psi_{k\pm}}=\ket{\tilde{\x}}^{\otimes k}\ket{\x_m}^{\otimes k}\pm\ket{\x_m}^{\otimes k}\ket{\tilde{\x}}^{\otimes k}$.

The index qudit used in the original work of the HC~\cite{QML_Maria_Francesco} and the STC~\cite{2019arXiv190902611B} can be useful for loading classical information to quantum states using state preparation routines such as quantum random access memory (QRAM)~\cite{PhysRevLett.100.160501,ffqram}, and also is the key element of the STC when all test, training, and label qubits are provided as a product state in separate registers. However, in principle, the index qudit is not strictly necessary in both classifiers. This can be seen by noting that the index qudit is not involved in any operation once the quantum state in Eq.~(\ref{eq:swap_test_in}) is prepared.

Now, let's consider the following scenario. There exists a quantum oracle that outputs $k$ copies of a training datum and its class label as a product state with a probability $a_m$. Then the output of such an oracle, together with an ancilla qubit initialized in $\ket{0}$ and the test data, can be expressed as
\begin{equation}
    \rho_\text{init}^k\coloneqq\ketbra{0}{0}\otimes\ketbra{\tilde{\x}}{\tilde{\x}}^{\otimes k}\otimes\sum_{m=1}^{M} \left( a_m\ketbra{\x_m}{\x_m}^{\otimes k}\otimes\ketbra{y_m}{y_m} \right).
\end{equation}
Applying $\mathcal{V}$ to the above state produces
\begin{align}
    \mathcal{V}\rho_\text{init}^k\mathcal{V}^{\dagger}= \frac{1}{4}&\sum_{i\in{\lbrace 0,1\rbrace}}\ketbra{i}{i}\otimes\Bigg{[}\sum_{m=1}^{M}a_m\ketbra{\tilde{\x}}{\tilde{\x}}^{\otimes k}\otimes\ketbra{\x_m}{\x_m}^{\otimes k}\otimes\ketbra{y_m}{y_m}\nonumber \\
  & +(-1)^i\sum_{m=1}^{M}a_m\ketbra{\tilde{\x}}{\x_m}^{\otimes k}\otimes\ketbra{\x_m}{\tilde{\x}}^{\otimes k}\otimes\ketbra{y_m}{y_m}\nonumber\\ &+(-1)^i\sum_{m=1}^{M}a_m\ketbra{\x_m}{\tilde{\x}}^{\otimes k}\otimes\ketbra{\tilde{\x}}{\x_m}^{\otimes k}\otimes\ketbra{y_m}{y_m}\nonumber\\
   &+ \sum_{m=1}^{M}a_m\ketbra{\x_m}{\x_m}^{\otimes k}\otimes\ketbra{\tilde{\x}}{\tilde{\x}}^{\otimes k}\otimes\ketbra{y_m}{y_m}\Bigg{]}+\ldots .
\end{align}
In the above equation, only the diagonal part of the ancilla qubit subspace is shown explicitly since only these terms contribute when the expectation measurement is performed on the ancilla qubit. The expectation measurement of $\mathcal{Z}_{al}=\sigma_z^{(a)}\sigma_z^{(l)}$ on the ancilla qubit and the label qubit results in
\begin{align}
    \langle\mathcal{Z}_{al}\rangle &= \frac{1}{2} \sum_{m=1}^Ma_m\Big{[}\Tr\left(\ketbra{\tilde{\x}}{\x_m}^{\otimes k}\otimes\ketbra{\x_m}{\tilde{\x}}^{\otimes k}\right)+\Tr\left(\ketbra{\x_m}{\tilde{\x}}^{\otimes k}\otimes\ketbra{\tilde{\x}}{\x_m}^{\otimes k}\right)\Big{]} \Tr\left(\sigma_z\ketbra{y_m}{y_m}\right)\nonumber\\
    &=\sum_{m=1}^Ma_m(-1)^{y_m}|\braket{\tilde\x}{\x_m}|^{2k},
\end{align}
which is identical to the outcome of the original swap-test classifier circuit introduced in Ref.~\cite{2019arXiv190902611B}.

\subsection{Classification without ancilla}
The unitary operator $\mathcal{V}$ followed by measuring the expectation of the observable $\Zal$ is equivalent to measuring the expectation value of an observable. This is easy to see since $\Tr\left(\Zal\V\rho_{\text{init}}^k\V^{\dagger}\right)=\Tr\left(\V^{\dagger}\Zal\V\rho_{\text{init}}^k\right)$. The effective observable can be expressed as
\begin{equation}
\V^{\dagger}\Zal\V=\sigma_z\otimes\left(\frac{1}{2}\sum_{i=0}^3\sigma_i\otimes\sigma_i\right)^{\!\!\otimes nk}\otimes\sigma_z,
\end{equation}
where $n$ is the number of qubits needed for representing the data, $\sigma_i$ corresponds to an $X$, $Y$, and $Z$ Pauli operator for $i=1$, 2, and 3, respectively, and $\sigma_0$ is the identity matrix. Since the initial state of the ancilla qubit is the $+1$ eigenstate of the first term of the effective observable, the measurement on the ancilla qubit simply produces a factor $+1$ and hence can be neglected. Therefore, the swap-test classifier can be generalized as measuring an expectation of an observable
\begin{equation}
\label{eq:O}
    \mathcal{O}\coloneqq \left(\frac{1}{2}\sum_{i=0}^3\sigma_i\otimes\sigma_i\right)^{\!\!\otimes nk}\otimes\sigma_z
\end{equation}
on $\ketbra{\tilde{\x}}{\tilde{\x}}^{\otimes k}\otimes\sum_{m=1}^{M}a_m\ketbra{\x_m}{\x_m}^{\otimes k}\otimes\ketbra{y_m}{y_m}$.
The condensed (minimal) version of the STC is depicted in Fig.~\ref{fig:2}.
\begin{figure}[t]
    \centering
    \includegraphics[width=0.45\columnwidth]{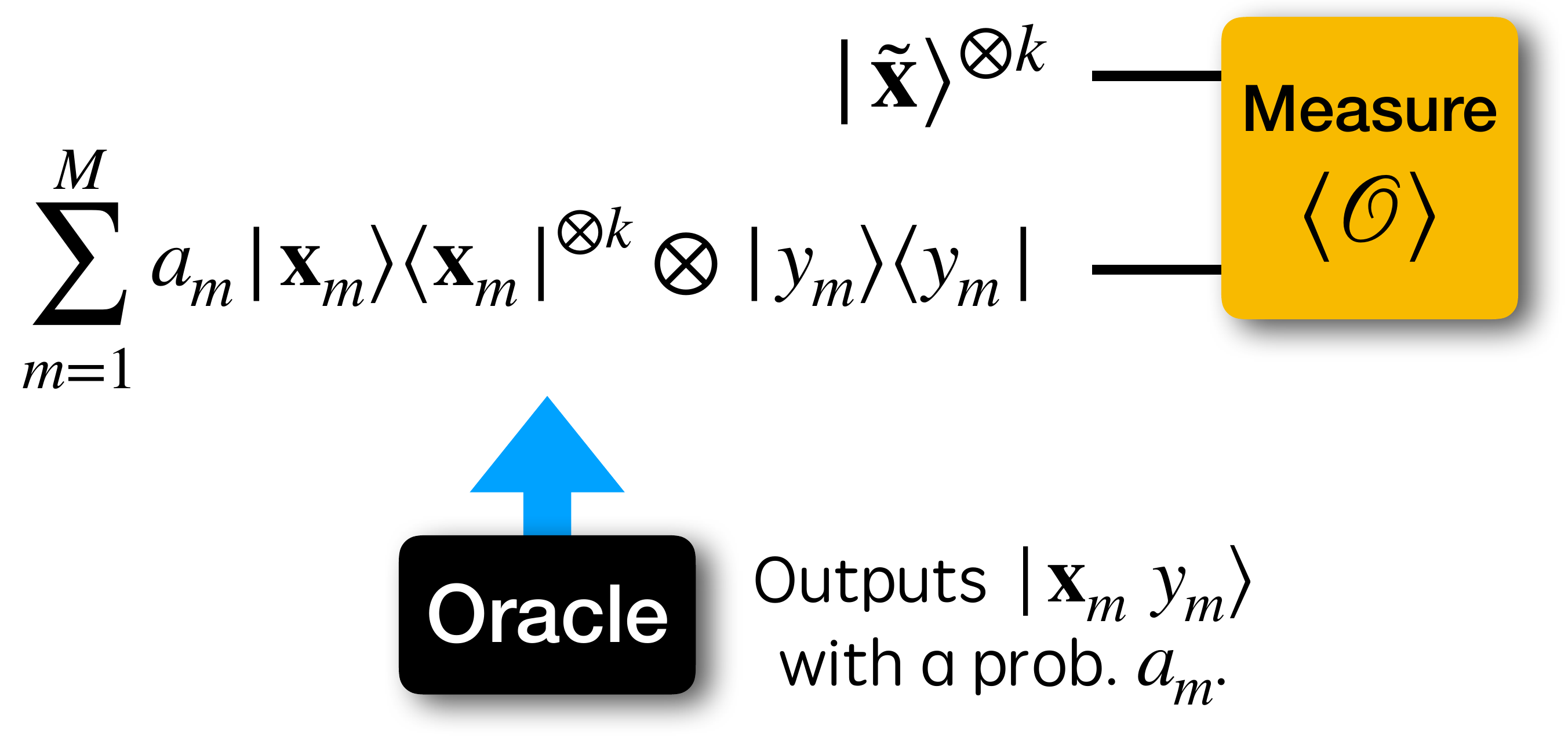}
    \caption{Schematic of the minimal quantum state fidelity-based classifier.}
    \label{fig:2}
\end{figure}
This version is particularly useful when $nk=1$, since in this case the classification can be performed by measuring only four Pauli observables without requiring an ancilla qubit and the quantum gates. In practice, the swap-test with an ancilla becomes useful when $nk$ is large since the number of Pauli operators to be measured increases exponentially with $n$ and $k$. However, if the observable of Eq.~(\ref{eq:O}) can be measured directly in a given experimental setting, the condensed version can be advantageous.

%%%==========================================
\section{Generalizations of the Swap-test Classifier}
\label{sec:general}
%%%==========================================
\subsection{Density Matrix Encoding}
\label{sec:mixed}
The STC algorithm proposed in Ref.~\cite{2019arXiv190902611B} assumes that all data are encoded in a pure state and is based on the state vector formalism. In general, test and training data can be encoded in density matrices. Here we generalize the swap-test classifier using the density matrix formalism. This can be useful for describing classification tasks involving arbitrary quantum states, noisy input data, etc. The density matrices describing test data and an $m$th training data are given as 
\begin{equation}
\label{eq:data_mixed}
\tilde\rho=\sum_{i}\tilde p_i\ketbra{\tilde\x_i}{\tilde\x_i}\text{ and }\rho_m=\sum_j p_{j,m}\ketbra{\x_{j,m}}{\x_{j,m}},
\end{equation}
respectively, where $\sum_i \tilde{p}_{i} = 1,\; \tilde{p}_{i}\geq 0$ and $\sum_j p_{j,m} = 1,\; p_{j,m}\geq 0$ for all $m=1, \ldots, M$. We use the similar setting as in Sec.~\ref{sec:without_ind} to assume that an oracle outputs $k$ copies of $\rho_m\otimes \ketbra{y_m}{y_m}$ with the probability of $a_m$. Then the initial state can be written as a density matrix
\begin{equation}
\label{eq:mixed_init}
    \ketbra{0}{0}\otimes\sum_{m=1}^M \left( a_m\left(\tilde\rho\otimes\rho_m\right)^{\otimes k}\otimes\ketbra{y_m}{y_m} \right).
\end{equation}
Note that the training and test data have been reordered in the above equation for notational convenience, but this does not affect the main results. After applying $\V$, the above state becomes
\begin{align}
\frac{1}{4}&\sum_{i\in\lbrace0,1\rbrace}\ketbra{i}{i}\otimes\bigg{[}\sum_{m=1}^Ma_m\left(\tilde\rho\otimes\rho_m\right)^{\otimes k}\otimes\ketbra{y_m}{y_m}\nonumber\\
&+(-1)^i\sum_{m=1}^Ma_m \left(\sum_{i,j}\tilde p_ip_{j,m}\ketbra{\tilde{\x}_i}{\x_{j,m}}\otimes \ketbra{\x_{j,m}}{\tilde{\x}_i}\right)^{\otimes k}\otimes\ketbra{y_m}{y_m}\nonumber\\
&+(-1)^i\sum_{m=1}^Ma_m\left(\sum_{i,j}\tilde p_ip_{j,m}\ketbra{\x_{j,m}}{\tilde{\x}_i}\otimes\ketbra{\tilde{\x}_i}{\x_{j,m}}\right)^{\otimes k}\otimes\ketbra{y_m}{y_m}\nonumber\\
&+\sum_{m=1}^Ma_m\left(\rho_m\otimes \tilde\rho\right)^{\otimes k}\otimes\ketbra{y_m}{y_m}\bigg{]}+\ldots .
\end{align}
Again, only the diagonal part of the ancilla qubit subspace is shown explicitly since only these terms contribute when an expectation measurement is performed on the ancilla qubit. The expectation measurement of $\mathcal{Z}_{al}=\sigma_z^{(a)}\sigma_z^{(l)}$ results in
\begin{equation}
\label{eq:mixed}
    \langle\mathcal{Z}_{al}\rangle =\sum_{m=1}^Ma_m(-1)^{y_m}\left(\sum_{i,j}\tilde p_ip_{j,m}|\braket{\tilde{\x}_i}{\x_{j,m}}|^2\right)^k=\sum_{m=1}^Ma_m(-1)^{y_m}\Tr(\tilde\rho\rho_m)^k.
\end{equation}
Since density matrices are Hermitian, non-negative and have trace $1$, they are of \textit{trace class} and the trace of the product of two density matrices is the Hilbert-Schmidt inner product~\cite{reed2012methods, conway2000course}, which is denoted by $\langle \tilde\rho, \rho_m\rangle_{HS}$. The above result clarifies that the STC is in fact based on the Hilbert-Schmidt inner product. The Hilbert-Schmidt inner product and the quantum state fidelity coincide for pure states. 

For general mixed states, the Hilbert-Schmidt inner product can misguide the classification in some cases. For example, if training and test data qubits are $\rho_m=\one/2$ and $\tilde\rho=\epsilon\ketbra{0}+(1-\epsilon)\ketbra{1}$, respectively, the fidelity of two quantum states is $\Tr(\sqrt{\sqrt{\one/2}\tilde\rho\sqrt{\one/2}})^2=1/2+\sqrt{\epsilon(1-\epsilon)}$, resulting in 1/2 for a pure state (i.e., $\epsilon=1$ or 0) and 1 for the maximally mixed state. On the other hand, the Hilbert-Schmidt inner product between any test density matrix and the training data is 1/2 and fails to provide a proper measure of similarity between the two. An important and interesting open question that stems from the above observation is how to design the quantum state fidelity-based binary classifier for general mixed states.

Unlike the STC, the qSVM or the HC cannot easily be generalized for mixed state input data. As shown in Eqs.~(\ref{eq:qsvm_initial_state}) and~(\ref{eq:hc_initial_state}), the qSVM and the HC encode the training and test data in the same qubit register, but separately in orthogonal subspaces of an ancilla qubit. Then the underlying principle of the classification is the interference of these subspaces. However, such encoding does not naturally extend to the data in mixed states. For example, as a naive extension of the construction shown in Eqs.~(\ref{eq:qsvm_initial_state}) and~(\ref{eq:hc_initial_state}) to mixed states, imagine that training and test data are encoded in the same register as $p_0\ketbra{0}{0}\otimes\rho_{training}+p_1\ketbra{1}{1}\otimes\rho_{test}$, $p_0+p_1=1$. This state describes a classical mixture of having either training or test data, and obviously, they cannot interfere nor be correlated quantum mechanically. To exploit quantum interference between two density matrices for mixed states, they should be encoded in separate registers and be correlated. The STC fulfils this requirement.

\subsection{Ensemble classifier}
Ensemble methods effectively combine multiple learning agents to improve the overall machine learning performance beyond the ability of any constituent learning agent~\cite{10.1007/3-540-45014-9_1}. Figure~\ref{fig:3} shows a pictorial representation of the ensemble method. In this section, we present the framework to formulate an ensemble classifier based on the STC.
\begin{figure}[t]
    \centering
    \includegraphics[width=0.4\columnwidth]{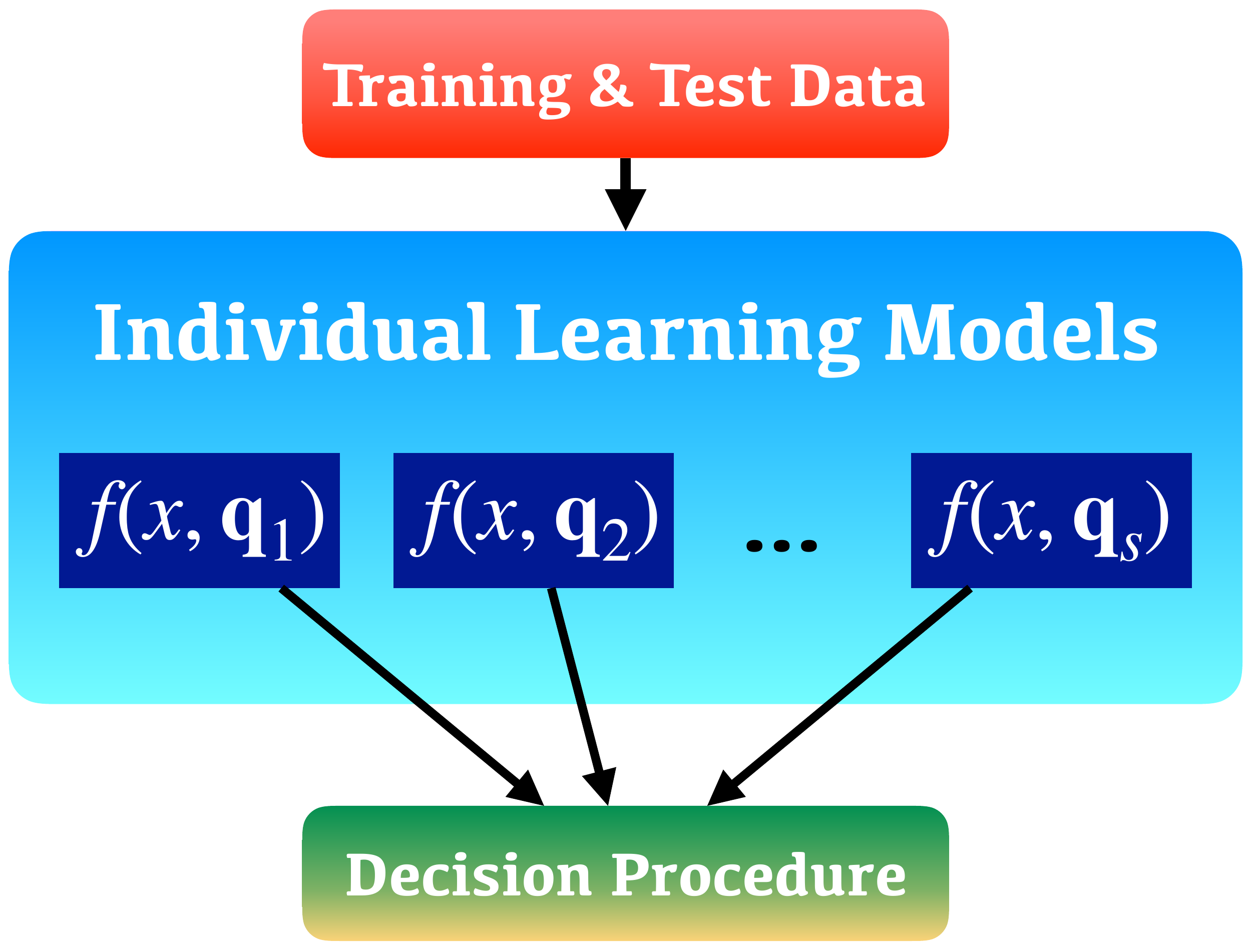}
    \caption{Pictorial description of an ensemble method in supervised learning. An ensemble learning method takes a set of individual learning methods, denoted by $f(x,\mathbf{q}_i)$ where $\mathbf{q}_i$ represents the parameters that determine the model, and employs a decision procedure to derive a new classifier that performs better than any $f$ by itself.}
    \label{fig:3}
\end{figure}
\subsubsection{Ensemble-average over weights}
The result shown in Eq.~(\ref{eq:mixed}) can be generalized even further to construct an ensemble average over $S$ different models of classifiers that are parametrized by the weights $\mathbf{a}$. If an initial state can be written as a density matrix (cf. Eq.~(\ref{eq:mixed_init}))
\begin{equation}
\label{eq:ens_w}
   \ketbra{0}{0}\otimes \sum_{s=1}^S q_s\left(\sum_{m=1}^M a_{m,s}\left(\tilde\rho\otimes\rho_m\right)^{\otimes k}\otimes\ketbra{y_m}{y_m} \right),
\end{equation}
where $\sum_s q_s = 1,\; q_s\ge 0$, then by linearity, the expectation measurement of $\mathcal{Z}_{al}$ results in
\begin{equation}
\label{eq:ens1}
    \langle\mathcal{Z}_{al}\rangle=\sum_{s=1}^Sq_s\sum_{m=1}^Ma_{m,s}(-1)^{y_m}\langle\tilde\rho,\rho_m\rangle_{HS}^k=\sum_{s=1}^Sq_s f(\tilde\rho,\mathbf{a}_s),
\end{equation}
where $f(\tilde\rho,\mathbf{a}_s)$ is a regression function for the test data $\tilde\rho$ with a model parameter, the weights, determined by $\mathbf{a}_s$.

\subsubsection{Ensemble-average over exponents}
An ensemble-average can be performed over classifiers with different exponent as follows. If an initial state can be prepared as (cf. Eqs.~(\ref{eq:mixed_init}) and~(\ref{eq:ens_w}))
\begin{equation}
\label{eq:ens_exp}
    \ketbra{0}{0}\otimes\sum_{s=1}^Sq_s\left(\sum_{m=1}^Ma_{m,s}\left(\tilde\rho\otimes\rho_m\right)^{\otimes k_s}\otimes\left(\one/2\right)^{\otimes (S-k_s)}\otimes\ketbra{y_m}{y_m}\right),
\end{equation}
then by linearity, the expectation measurement of $\mathcal{Z}_{al}$ results in
\begin{equation}
\label{eq:ens2}
    \langle\mathcal{Z}_{al}\rangle=\sum_{s=1}^Sq_s\sum_{m=1}^Ma_{m,s}(-1)^{y_m}\langle\tilde\rho,\rho_m\rangle_{HS}^{k_s}=\sum_{s=1}^Sq_s f(\tilde\rho,\mathbf{a}_s,k_s),
\end{equation}
where again $\sum_s q_s = 1,\; q_s\ge 0$ and $f(\tilde\rho,\mathbf{a}_s,k_s)$ is a regression function for the test data $\tilde\rho$ with two model parameters, the weights and the exponent, determined by $\mathbf{a}_s$ and $k_s$, respectively. Note that the ancillary density matrix $\left(\one/2\right)^{\otimes (S-k_s)}$ can be replaced with any $2^{S-k_s}$ by $2^{S-k_s}$ density matrix.

\subsection{Classification via projective measurement}
The eigenstates of the swap operator are $\ket{00}$, $\ket{11}$, $(\ket{01}+\ket{10})/\sqrt{2}$ and $(\ket{01}-\ket{10})/\sqrt{2}$, with eigenvalues 1, 1, 1, and -1, respectively. Thus when $nk=1$, the spectral decomposition of $\mathcal{O}$ can be written as
\begin{equation}
\label{eq:proj_m}
    \mathcal{O}=\sum_{i=1}^8\lambda_i\ketbra{\lambda_i}{\lambda_i},
\end{equation}
where $\lambda_i=1$ for $i\le 4$ and $\lambda_i=-1$ for $i > 4$, and
\begin{align}
\label{eq:eigen8}
    \ket{\lambda_1}=\ket{000},\; \ket{\lambda_2}=\ket{110},&\;
    \ket{\lambda_3}=\frac{\ket{01}+\ket{10}}{\sqrt{2}}\otimes\ket{0},\; \ket{\lambda_4}=\frac{\ket{01}-\ket{10}}{\sqrt{2}}\otimes\ket{1},\nonumber \\
    \ket{\lambda_5}=\ket{001},\; \ket{\lambda_6}=\ket{111},&\;
    \ket{\lambda_7}=\frac{\ket{01}+\ket{10}}{\sqrt{2}}\otimes\ket{1},\; \ket{\lambda_8}=\frac{\ket{01}-\ket{10}}{\sqrt{2}}\otimes\ket{0}.
\end{align}
For $nk>1$, an eigenstate can be obtained by taking a tensor product between $\ket{\lambda_i}$ and an eigenstate of the swap operator. For simplicity, we use $nk=1$ to illustrate the underlying idea. With the above spectral decomposition, the projective measurement of $\mathcal{O}$ can be described as follows. Given the density matrix $\rho=\tilde\rho\otimes\sum_ma_m\rho_m\otimes\ketbra{y_m}$, the probability to obtain the measurement outcome of $\lambda\in\lbrace+1,-1\rbrace$ is
\begin{align}
\label{eq:Pr_a1}
    \Pr[\lambda]&=\Tr\left(\rho\sum_{i|\lambda_i=\lambda}\ketbra{\lambda_i}\right)\nonumber\\
    &=\frac{1}{2}\left\lbrace\Tr\lbrack\rho_{\lambda}\left(\one\otimes\one+\mathcal{S}\right)\rbrack+\Tr\lbrack(\Tr_l(\rho)-\rho_\lambda\right)\left(\one\otimes\one-\mathcal{S}\right)\rbrack\rbrace,
\end{align}
where $\Tr_l$ represents the partial trace over the label qubit, $\rho_\lambda = \Tr_l(\rho \ketbra{l_\lambda}{l_\lambda})=\tilde\rho \otimes \sum_{m|y_m=l_\lambda}a_m\rho_m$ with $l_\lambda\coloneqq (1-\lambda)/2$, and $\mathcal{S}$ is the swap operator. We can further simplify the above result to obtain
\begin{align}
\label{eq:Pr_a2}
    \Pr[\lambda]&=\frac{1}{2}\left(1+\sum_{m|y_m=l_\lambda}a_m\Tr(\tilde\rho\rho_m)-\sum_{m|y_m\neq l_\lambda}a_m\Tr(\tilde\rho\rho_m)\right) \nonumber\\
    &=\frac{1}{2} \left( 1 + \lambda \sum_{m} (-1)^{y_m} a_m
    \langle\tilde\rho,\rho_m\rangle_{HS}\right).
\end{align}
Therefore, the STC can be described by a projective measurement that outputs $\lambda=+1$ or $-1$ with the probability $\Pr[\lambda]$ shown in Eq.~(\ref{eq:Pr_a2}). One can easily verify that the measurement outcome averages to Eq.~(\ref{eq:mixed}):
\begin{equation*}
    \langle \mathcal{Z}_{al} \rangle = -1 \Pr[\lambda=-1] + 1 \Pr[\lambda=1] = \sum_{m} (-1)^{y_m} a_m
    \langle\tilde\rho,\rho_m\rangle_{HS}.
\end{equation*}
This calculation gives a rigorous understanding of the single-shot behavior of the STC.

Given the above projective measurement, the error probability of misclassification can also be calculated. For this, we assume that the test data can be expressed as $\tilde\rho=p_0\tilde\rho_0+p_1\tilde\rho_1$, meaning that with a probability of $p_i$, the correct class of the test data is $i$. Then the misclassification probability is
\begin{align}
    \text{Pr}_{error} &= p_0\Tr\Bigg{(}\rho_0\otimes \sum_{m}^M a_m\rho_m\otimes\ketbra{y_m}\!\sum_{i|\lambda_i=-1}\!\ketbra{\lambda_i}\Bigg{)}\nonumber \\
    &\; +p_1\Tr\Bigg{(}\rho_1\otimes \sum_{m}^M a_m\rho_m\otimes\ketbra{y_m}\!\sum_{i|\lambda_i=+1}\!\ketbra{\lambda_i}\Bigg{)}\nonumber \\
    &=\frac{1}{2}\Bigg{(}1-\sum_{m=1}^Ma_m\Tr(\tilde\rho\rho_m)\Bigg{)}+p_0\!\sum_{m|y_m=1}\!a_m\Tr(\tilde\rho_0\rho_m)+p_1\!\sum_{m|y_m=0}\!a_m\Tr(\tilde\rho_1\rho_m)\nonumber\\
    \label{eq:proj_error}
    &=\frac{1}{2}\Bigg{(}1+\sum_{m|y_m=1}a_m\langle p_0\tilde\rho_0-p_1\tilde\rho_1,\rho_m\rangle_{HS}+\sum_{m|y_m=0}a_m\langle p_1\tilde\rho_1-p_0\tilde\rho_0,\rho_m\rangle_{HS}\Bigg{)}.
\end{align}
The error probability equation is not only useful for benchmarking the performance of the single-shot STC, but also striking as the Helstrom operator again appears in it. This result promotes future work on understanding the fundamental link between the STC and quantum state discrimination. 

%======================================================================
\section{Kernel Methods}
\label{sec:KernelMethods}
%======================================================================
All classifiers discussed in Sec.~\ref{sec:review} can in principle be combined with an encoding of the data that can formally be expressed as a map $\Phi : \mathcal{X} \rightarrow \mathcal{H}$ of a data domain $\mathcal{X}$ into a quantum Hilbert space $\mathcal{H}$. This map is commonly called a \textit{feature map} in the domain of machine learning and can have tremendous effects in solving non-linear classification problems~\cite{Havlicek2019}. 

In Sec.~\ref{sec:mixed}, Eq.~(\ref{eq:mixed}) shows that the underlying similarity measure is the Hilbert-Schmidt inner product, which becomes the squared overlap function when the data is given in pure states as in Ref.~\cite{2019arXiv190902611B}. To fully enjoy the kernel theory, the similarity measures must satisfy certain conditions: they must be \textit{symmetric} and \textit{positive definite} kernels. Then Mercer's theorem and the Representer theorem can be applied and the connection to existing theory~\cite{scholkopf2001generalized, shawetaylor2004kernel, scholkopf2001learning} can be made.

\begin{definition}\label{def:psd}
Given a function $\kappa: \mathcal{X} \times \mathcal{X} \rightarrow \CC$ from a set $\mathcal{X}$, the function is called positive semi-definite (PSD) when for a finite set of elements $S=\{\x_1, \ldots, \x_M\} \subset \mathcal{X}$ and all $c_n \in \mathbbm{C}$, $n=1, \ldots, M$, the Gram matrix is positive semi-definite, i.e.,
\begin{equation}
    \sum_{n,m=1}^M \overline{c}_n c_m \kappa(\x_n,\x_m) \geq 0.
\end{equation}
\end{definition}
When the kernel is real-valued, symmetry is also required, i.e., $\kappa(\x, \y) = \kappa(\y, \x)$. If a kernel is PSD (often referred to as positive definite kernel) and symmetric, one can construct a feature map $\Phi: \mathcal{X} \rightarrow \mathcal{H}_R$ with a Hilbert space $\mathcal{H}_R$, the reproducing kernel Hilbert space (RKHS), in which the kernel $\kappa$  can be represented as an inner product~\cite{hofmann2008_kernel_methods_ml}. The construction is summarized here. The RKHS is a subspace of all complex functionals on $\mathcal{X}$, i.e., $\mathcal{H}_R \subseteq \CC^\mathcal{X}$:
\begin{equation}
    \mathcal{H}_R = \left\{ f : f(\cdot) = \sum_{i=1}^\infty \alpha_i \kappa(\cdot, \x_i),\; \alpha_i \in \CC,\; \x_i \in \mathcal{X} \right\}.
\end{equation}
In order to be a Hilbert space, an inner product needs to be defined. It can be constructed by a standard procedure as follows. First, one defines a set of functions that have easy-to-use (e.g. finite dimensional) and desired properties on which such an inner product can exist. Then one argues that this set is dense in $\mathcal{H}_R$ and thus the properties in the limit are true as well. To be more precise, given the set of functions
\begin{equation}
    \mathcal{F}' = \left\{ f : f(\cdot) = \sum_{i=1}^m \alpha_i \kappa(\cdot, \x_i),\; \alpha_i \in \CC,\; \x_i \in \mathcal{X},\; m \in \NN \right\},
\end{equation}
a sesquilinear form is defined as
\begin{equation}
\label{eq:RKHS_inner_product}
    \varphi(f, g) = \sum_{i=1}^m\sum_{j=1}^{m'} \overline{\alpha}_i \beta_j \kappa(\x_i, \x_j') 
\end{equation}
for $f,g\in\mathcal{F}'$. Let us recall what a sesquilinear form actally is. 
\begin{definition}\label{def:sesquilinear_form}
Let $\varphi: \mathcal{H} \times \mathcal{H} \rightarrow \CC$ on a Hilbert space $\mathcal{H}$ with
\begin{align}
    \varphi(x+y,w+v) &= \varphi(x,w) + \varphi(x,v) + \varphi(y,w) + \varphi(y,v),\nonumber \\
    \varphi(ax, bw) &= \overline{a}b\varphi(x,w)\nonumber
\end{align}
for $a,b\in\CC$ and $x,y,w,v \in \mathcal{H}$. Then $\varphi$ is called a \textbf{sesquilinear form}. Furthermore, if $\varphi(x,w) = \overline{\varphi(w,x)}$, then it is called a \textbf{Hermitian sesquilinear form}.
\end{definition}
\noindent Then $\varphi$ can be shown to be independent of the representation (the coefficients) and has all properties that an inner product holds. The feature map can be defined as $\Phi(\x) = \kappa(\cdot, \x) \in \CC^\mathcal{X}$ and $\varphi(\kappa(\cdot, \x), \kappa(\cdot, \x')) = \kappa(\x, \x')$. As the set of functions $\mathcal{F}'$ is dense in $\mathcal{H}_R$ with respect to the norm induced by Eq.~(\ref{eq:RKHS_inner_product}) in $ \mathcal{X}$, one can show that the representation in the limit of $m\rightarrow \infty$ also has all described properties. Therefore the space $\mathcal{H}_R$ is a Hilbert space and we denote the inner product as $\langle f, g \rangle_R = \varphi(f, g)$ for $f,g \in \mathcal{H}_R$ to show that the sesquilinear form is an inner product.

As a consequence one can use the \textit{Representer Theorem}~\cite{scholkopf2001generalized} which roughly states that a function $f \in \mathcal{H}_R$ minimizing the cost $\mathcal{L}(f) = c\left(\left(\x_1, y_1, f\left(\x_1\right)\right), \ldots, \left(\x_M, y_M, f\left(\x_M\right)\right)\right) + \gamma\left(\|f\|\right)$ with strictly monotonically increasing $
\gamma: \RR_{\geq 0} \rightarrow \RR \cup \{\infty\}$ can be represented by a finite sum of kernel evaluations
\begin{equation}
    f(\cdot) = \sum_{i=1}^M \alpha_i \kappa(\cdot, \x_i).
\end{equation}
For the SVM, this means that the calculated optimal separating hyperplane can be represented by a finite evaluation of weighted kernel evaluations. This helpful result has been adapted to quantum machine learning by Ref.~\cite{PhysRevLett.122.040504} and opened up a solid mathematical framework for quantum SVM algorithms. Moreover, as shown in Sec.~\ref{sec:review}, given trained weights (i.e., Karush-Kuhn-Tucker multipliers) the swap-test classifier represents the function that optimally separates the training data from each other. The discovery of this connection is imperative as it gives a thorough mathematical understanding of quantum kernel-based classifiers.

%======================================================================
\subsection{Positive Definiteness}
%======================================================================
Here we address the central question, whether the similarity measures $\kappa: \mathcal{X} \times \mathcal{X} \rightarrow \FF$ ($\FF = \CC$ or $\RR$) used in the STC are indeed positive definite (and symmetric when $\FF = \RR$) kernels. Given a Hermitian sesquilinear form we find
\begin{equation}
    \sum_{i.j} \overline{c}_i c_j \varphi(\x_i, \x_j)
    = \sum_{i.j} \varphi(c_i \x_i, c_j\x_j) 
    = \varphi\left( \sum_i c_i \x_i, \sum_j c_j \x_j \right) \in \RR \label{eq:sesquilinear_psd}
\end{equation}
for any $\x_i,\x_j \in \mathcal{H}$. As a consequence, it is only necessary to show that $\varphi(\x, \x) \geq 0$ for all $\x \in \mathcal{H}$ to deduce that $\varphi$ is PSD. If $\varphi$ is an inner product, we know that it is a Hermitian sesquilinear form but with $\langle \x, \x \rangle \geq 0$. Therefore, in general, any inner product is PSD. 

\subsubsection{Hilbert-Schmidt Inner Product}
As we have seen in the most general setting of Eq.~(\ref{eq:mixed}), we have the Hilbert-Schmidt inner product $\langle A,B\rangle_{HS} = \Tr(A^\dagger B)$ as the Hermitian sesquilinear form. Let $A$ be a bounded linear operator on $\mathcal{H}$. Then $A^\dagger A$ is Hermitian and non-negative and as a consequence $\Tr(A^\dagger A) \geq 0$. We can conclude that together with Eq.~(\ref{eq:sesquilinear_psd}) the Hilbert-Schmidt inner product is PSD and hence a kernel. Obviously, the restriction from the original space of Hilbert-Schmidt operators to density matrices also retains the PSD property: if $A = \rho_1$ and $B = \rho_2$ are two density matrices, then $\Tr(\rho_1^\dagger \rho_2) = \Tr(\rho_1 \rho_2)$ is a kernel. A further restriction to pure states, i.e., $\rho_1 = \ketbra{\x}{\x}$ and $\rho_2 = \ketbra{\y}{\y}$, must also retain this property. In this case, $\Tr(\rho_1\rho_2) = \Tr(\ketbra{\x}{\x}\cdot\ketbra{\y}{\y}) = |\braket{\x}{\y}|^2$. Thus the squared overlap function is also a valid kernel.

\subsubsection{Alternative elementary proofs}

As an alternative to the above reasoning, we provide elementary proofs for the squared overlap function and the trace of product of two density matrices.
\begin{proposition}\label{thm:fidelity_is_kernel}
Given a quantum Hilbert space $\mathcal{H}$, the function $\kappa: \mathcal{H} \times \mathcal{H} \rightarrow \mathbbm{R}$ as defined 
\begin{equation}
\label{eq:statefidelity}
    \kappa(\ket{x},\ket{y}) = \left| \braket{x}{y} \right|^2
\end{equation}
is symmetric and positive definite, i.e., $\sum_{i,j} a_i a_j \kappa(x_i, x_j) \geq 0$ for all finite sets of $S = \{x_1, \ldots, x_M\} \subset \mathcal{X}$ and all $a_i \in \mathbbm{R}$.
\end{proposition}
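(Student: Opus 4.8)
The plan is to treat symmetry and positive definiteness separately, with the latter carrying the real content of the claim.

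Symmetry I expect to be immediate. Since $\braket{x}{y} = \overline{\braket{y}{x}}$ for any two states, taking the modulus squared gives $\kappa(\ket{x},\ket{y}) = |\braket{x}{y}|^2 = |\braket{y}{x}|^2 = \kappa(\ket{y},\ket{x})$, and because the value is a modulus squared it is manifestly real, confirming that $\kappa$ indeed maps into $\RR$ as stated. So the work is concentrated entirely in the positive-definiteness condition.

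For positive definiteness, my strategy is to exhibit $\kappa$ explicitly as an honest inner product in a lifted feature space, so that the Gram sum collapses to a squared norm. I would first rewrite $|\braket{x_i}{x_j}|^2 = \braket{x_i}{x_j}\,\overline{\braket{x_i}{x_j}}$ and then search for vectors in a larger space whose inner product reproduces this. The natural candidate is the lift $\ket{v_i} := \ket{x_i}\otimes\ket{\overline{x_i}}$, where $\ket{\overline{x_i}}$ is the component-wise complex conjugate of $\ket{x_i}$. The identity that makes this work, and which I would verify in components, is $\braket{\overline{x_i}}{\overline{x_j}} = \overline{\braket{x_i}{x_j}}$; combining it with the factorization $\braket{v_i}{v_j} = \braket{x_i}{x_j}\,\braket{\overline{x_i}}{\overline{x_j}}$ yields $\braket{v_i}{v_j} = |\braket{x_i}{x_j}|^2$ exactly. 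Since the $a_i$ are real, so $a_i = \overline{a_i}$, bilinearity of the inner product then gives
\begin{equation*}
    \sum_{i,j} a_i a_j\, \kappa(x_i,x_j) = \sum_{i,j} \overline{a_i}\, a_j \braket{v_i}{v_j} = \left\| \sum_i a_i \ket{v_i} \right\|^2 \geq 0,
\end{equation*}
which is precisely the positive-definiteness condition of Definition~\ref{def:psd}.

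I expect the only genuine obstacle to be the bookkeeping in the feature-space identity: one must check that the conjugation sits on the \emph{second} tensor factor, since it is exactly this that converts the spurious $\braket{x_i}{x_j}^2$ (which one would obtain from the naive lift $\ket{x_i}\otimes\ket{x_i}$) into the desired $|\braket{x_i}{x_j}|^2$. As a parallel route that sidesteps the conjugate altogether, I could instead use the trace form $|\braket{x_i}{x_j}|^2 = \Tr(\ketbra{x_i}{x_i}\ketbra{x_j}{x_j})$ and set $B = \sum_i a_i \ketbra{x_i}{x_i}$; then $\sum_{i,j} a_i a_j\, \kappa(x_i,x_j) = \Tr(B^2) = \Tr(B^\dagger B) \geq 0$ because $B$ is Hermitian, recovering the same conclusion and tying the elementary argument back to the Hilbert-Schmidt reasoning given above.
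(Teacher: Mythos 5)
Your proof is correct, and it takes a genuinely different route from the paper's. The paper likewise starts by showing that $k_1(\ket{x},\ket{y})=\braket{x}{y}$ and $k_2(\ket{x},\ket{y})=\overline{\braket{x}{y}}$ are each PSD (via the same collapse of the Gram sum to a squared norm that you use), but it then closes the argument with the Schur product theorem: the Kronecker product $K_1\otimes K_2$ of two PSD Gram matrices is PSD, and the Hadamard product $K_1\circ K_2$, whose entries are exactly $|\braket{x_i}{x_j}|^2$, is a principal submatrix of it. Your lift $\ket{v_i}=\ket{x_i}\otimes\ket{\overline{x_i}}$ makes that submatrix step explicit and constructive: the $\ket{v_i}$ are precisely the ``diagonal'' vectors of the tensor-product feature space whose Gram matrix is that Hadamard product, so you obtain the squared-norm identity $\sum_{i,j}a_ia_j\kappa(x_i,x_j)=\bigl\|\sum_i a_i\ket{v_i}\bigr\|^2$ in one step, without invoking the general product-of-kernels lemma; the conjugation on the second tensor factor is indeed the essential piece of bookkeeping, since the naive lift $\ket{x_i}\otimes\ket{x_i}$ would give $\braket{x_i}{x_j}^2$ rather than $|\braket{x_i}{x_j}|^2$. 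What your route buys is self-containedness and an explicit feature map $\x\mapsto\ket{\x}\otimes\ket{\overline{\x}}$; what the paper's route buys is reusability, since the product-of-kernels lemma applies verbatim to the higher powers $|\braket{x}{y}|^{2k}$ that appear elsewhere in the classifier. Your second route, $\sum_{i,j}a_ia_j\kappa(x_i,x_j)=\Tr(B^2)=\Tr(B^\dagger B)\geq 0$ with $B=\sum_i a_i\ketbra{x_i}{x_i}$ Hermitian, is also sound and is essentially the pure-state specialization of the Hilbert--Schmidt argument the paper gives in the subsection immediately preceding the proposition.
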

\begin{proof}
The proof can be looked up in~\cite{shawetaylor2004kernel} and is done by applying the theory of Hermitian sesquilinear forms (Def.~\ref{def:sesquilinear_form}), rather than symmetric bilinear forms as is usually applied in the domain of machine learning. 
We define two functions $k_1(\ket{x}, \ket{y}) = \braket{x}{y}$ and $k_2(\ket{x}, \ket{y}) = \overline{\braket{x}{y}}$ and first show that both of them are kernels. Then we will show that the product of two kernels is also a kernel. As seen by the properties of the inner product, it can be shown that $k_1$ is PSD, cf. Def.~\ref{def:psd}
\begin{equation}
    \sum_{i,j} \overline{c}_i c_j \braket{\x_i}{\x_j}
    = \sum_{i,j} (c_i\ket{\x_i})^\dagger(c_j\ket{\x_j})
    = \underbrace{\Bigg{(}\sum_i c_i\ket{\x_i}\Bigg{)}^\dagger}_{=:\bra{\phi}} \underbrace{\Bigg{(}\sum_j c_j\ket{\x_j}\Bigg{)}}_{=:\ket{\phi}}
    = \braket{\phi}{\phi}
    = \left\| \phi \right\|^2 \geq 0.
\end{equation}
These steps are valid by linearity in the first argument and conjugate symmetry. Therefore the inner product is by natural choice positive semi-definite.
Now for the conjugate case:
\begin{equation}
    \sum_{i,j} \overline{c}_i c_j \overline{\braket{\x_i}{\x_j}}
    = \sum_{i,j} \left( \overline{c}_j \ket{\x_j} \right)^\dagger \left( \overline{c}_i \ket{\x_i} \right)
    = \underbrace{\Bigg{(}\sum_i \overline{c}_i\ket{\x_i}\Bigg{)}^\dagger}_{=:\bra{\phi}} \underbrace{\Bigg{(}\sum_j \overline{c}_j\ket{\x_j}\Bigg{)}}_{=:\ket{\phi}}
    = \braket{\phi}{\phi}
    = \left\| \phi \right\|^2 \geq 0.
\end{equation}
Since both $k_1$ and $k_2$ are PSD, we can continue to look at the product $k_1 k_2$. Let $K_1$ and $K_2$ be the Gram matrices of $k_1$ and $k_2$, respectively. The Kronecker product of the Gram matrices $K=K_1 \otimes K_2$ is positive semi-definitive if $K_1$ and $K_2$ are~\cite{shawetaylor2004kernel}, which was just proved. The proof is finalized by noting that the Schur (or Hadamard) product $H = K_1 \circ K_2$ is a sub-matrix of $K$, i.e., for any $x \in X$ we can find $\tilde{x} \in X \otimes X$ such that $x^\dagger H x = \tilde{x}^\dagger K \tilde{x} \geq 0$. The only remaining fact to be checked is that $\kappa$ is symmetric which is obvious.
\end{proof}

Now we show that the trace of the product of two density matrices is also positive semi-definite. This is a consequence of the definition of the trace of two density matrices. The space of the linear operators $\rho_m$ for $m=1, \ldots, M$ defined in Eq.~(\ref{eq:data_mixed}) is denoted as $\mathcal{D} \subset \mathcal{B}(\mathcal{H})$. Then we find the following lemma as a consequence of Proposition~\ref{thm:fidelity_is_kernel}.
\begin{lemma}
The function $\kappa_\text{tr}: \mathcal{D} \times \mathcal{D} \rightarrow \mathbbm{R}$ with $ \kappa_\text{tr}(\rho_1, \rho_2) = \Tr(\rho_1\rho_2)$ is positive definite.
\end{lemma}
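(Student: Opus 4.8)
The plan is to reduce the claim directly to Proposition~\ref{thm:fidelity_is_kernel} by unfolding each density matrix into its constituent pure states. Because $\kappa_\text{tr}$ is real-valued, what I must establish is positive definiteness in the sense of Definition~\ref{def:psd} together with symmetry. Symmetry is immediate from the cyclic property of the trace, $\Tr(\rho_1\rho_2) = \Tr(\rho_2\rho_1)$, so the entire task is to show that for any finite collection $\rho_1, \ldots, \rho_M \in \mathcal{D}$ and any real coefficients $a_1, \ldots, a_M$ the Gram sum $\sum_{n,m} a_n a_m \Tr(\rho_n\rho_m)$ is non-negative.

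First I would substitute the convex decomposition $\rho_m = \sum_j p_{j,m}\ketbra{\x_{j,m}}{\x_{j,m}}$ from Eq.~(\ref{eq:data_mixed}) into each factor. Using $\Tr(\ketbra{x}{x}\ketbra{y}{y}) = |\braket{x}{y}|^2$, the product collapses to
\begin{equation*}
\Tr(\rho_n\rho_m) = \sum_{i,j} p_{i,n}\,p_{j,m}\,|\braket{\x_{i,n}}{\x_{j,m}}|^2,
\end{equation*}
so the full Gram sum becomes a quadruple sum over the index pairs $(n,i)$ and $(m,j)$ of squared overlaps weighted by $a_n a_m\, p_{i,n}\, p_{j,m}$.

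The key step is then a reindexing. I would collect every pure state $\ket{\x_{j,m}}$ appearing across all decompositions into a single family $\{\ket{\phi_\alpha}\}$ indexed by the pairs $\alpha = (m,j)$, and assign to each the real weight $b_\alpha := a_m\, p_{j,m}$. With this relabelling the Gram sum reads $\sum_{\alpha,\beta} b_\alpha b_\beta\, |\braket{\phi_\alpha}{\phi_\beta}|^2$, which is exactly the positive-definiteness expression for the squared overlap kernel. Since Proposition~\ref{thm:fidelity_is_kernel} has already established that $|\braket{x}{y}|^2$ is positive definite for arbitrary real coefficients, this sum is non-negative, which finishes the argument.

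The only point requiring care --- and the nearest thing to an obstacle --- is that Proposition~\ref{thm:fidelity_is_kernel} is stated for real coefficients, so I must confirm that the induced weights $b_\alpha = a_m\, p_{j,m}$ are themselves real; this holds because each $a_m \in \RR$ and each probability $p_{j,m} \geq 0$. Crucially, no positivity of the $a_m$ is needed, as Proposition~\ref{thm:fidelity_is_kernel} already admits arbitrary real inputs. Everything else is bookkeeping, and no convergence subtlety arises since each density matrix has a finite-rank decomposition and the collection of matrices is finite.
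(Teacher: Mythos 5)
Your proposal is correct and follows essentially the same route as the paper's proof: both expand each $\rho_m$ into its pure-state decomposition, absorb the probabilities into the coefficients to form the weights $a_m p_{j,m}$, and reduce the Gram sum to the positive definiteness of the squared-overlap kernel established in Proposition~\ref{thm:fidelity_is_kernel}, with symmetry from the cyclic property of the trace. The only cosmetic difference is that the paper invokes the feature map $\Phi$ associated with $\kappa$ and writes the sum as a squared norm $\|\phi\|^2$, whereas you apply the Gram-matrix inequality of Definition~\ref{def:psd} directly to the reindexed family of pure states; both steps are equivalent.
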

\begin{proof}
We note that $\Tr(\rho_n \rho_m) = \sum_{i,j} p_{i,n} p_{j,m} \left| \braket{\x_{i,n}}{\x_{j,m}} \right|^2 = \sum_{i,j} p_{i,n} p_{j,m} \kappa(\x_{i,n}, \x_{j,m})$. By an elementary calculation, we see that for $a_n,a_m\in\RR$
\begin{align}
    \sum_{n,m} a_n a_m \Tr(\rho_n\rho_m)
    &= \sum_{n,m} a_n a_m \sum_{i=1}^{M_n} \sum_{j=1}^{M_m} p_{i,n} p_{j,m} \kappa(\x_{i,n}, \x_{j,m})\nonumber \\
    &= \sum_{n,m} a_n a_m \sum_{i=1}^{M_n} \sum_{j=1}^{M_m} p_{i,n} p_{j,m} \braket{\Phi(\x_{i,n})}{\Phi(\x_{j,m})}\nonumber \\
    &= \underbrace{\left( \sum_n\sum_i a_n p_{i,n} \ket{\Phi(\x_{i,n})} \right)^\dagger}_{=: \bra{\phi}} \underbrace{\left( \sum_m\sum_j a_m p_{j,m} \ket{\Phi(\x_{j,m})} \right)}_{=: \ket{\phi}}\nonumber \\
    &= \| \phi \|^2 \geq 0
\end{align}
The basic argument relies on the fact that $\kappa$ is a kernel and thus there exists a feature map $\Phi$ such that it is an inner product of an adequately constructed Hilbert space. Also it is important to note that we use the real-valued version of Def.~\ref{def:psd} as density matrices are Hermitian and non-negative with trace $1$, therefore the trace of the product is also real and non-negative. The last fact to be verified is that $\kappa_{tr}$ is symmetric by the cyclic property of the trace.
\end{proof}

With the help of Mercer's theorem, an interesting observation on the STC is made. There exists a Hilbert space (the RKHS) with a feature map $\Phi: \mathcal{X} \rightarrow \mathcal{H}_{R}$, such that the STC can be represented as
\begin{align}
    f(\testx) = \sum_m (-1)^{y_m} a_m \braket{\Phi(\x_m)}{\Phi(\testx)} = \braket{\Phi_0}{\Phi(\x)} - \braket{\Phi_1}{\Phi(\x)}
\end{align}
with $\Phi_l = \sum_{m|y_m = l} a_m \Phi(\x_m)$ being the feature space class-centroids in the RKHS. This is an estimation of the Bayes decision rule, see e.g. Ref.~\cite{hofmann2008_kernel_methods_ml}. It in essence means that if there are two values $\hat{\x}_0, \hat{\x}_1 \in \mathcal{X}$ such that $\Phi(\hat{\x}_l) = \Phi_l$, then the STC is equivalent to applying the classification to two training data.
%======================================================================
\section{Conclusion}
\label{sec:conclusion}
%======================================================================
In this work, we reviewed and extended the general theory of quantum kernel-based classifiers. We compared among the quantum support vector machine (qSVM), the Hadamard classifier (HC) and the swap-test classifier (STC). While doing so we showed that either the HC or the STC can be used interchangeably as classifier of the qSVM algorithm. For the extension of the existing theory, we focused on the STC since it employs the similarity measure that is more natural to quantum states than that used in the HC or the qSVM. We showed that the number of qubits necessary for implementing the STC can be further reduced compared to its original construction in Ref.~\cite{2019arXiv190902611B}. The STC was originally shown to be based on the quantum state fidelity (squared overlap) between pure quantum states. Here we generalized the STC for data encoded in density matrices and showed that in fact the STC is based on the Hilbert-Schmidt inner product between two density matrices. Furthermore, by modifying the initial state, the STC can also perform ensemble learning. We showed that the expectation measurement used in the original work of the STC can be replaced with projective measurement and calculated the probability of misclassification. Finally, we explicitly showed that the Hilbert-Schmidt inner product, and hence the squared overlap between two pure states, are symmetric and positive semi-definite kernels. Thus the STC can inherit the mathematical framework of the kernel method developed in classical machine learning.

Interesting future work includes the extension of the quantum-kernel based classifier to multi-label classification, supervised learning with noisy labels, and unsupervised learning algorithms. The development of quantum machine learning algorithms based on other distance measures such as the Bures distance, which can be defined using the quantum state fidelity function for general density matrices, and the trace distance is also a prominent future research topic.

\section*{Acknowledgements}
This research is supported by the National Research Foundation of Korea (Grant No. 2019R1I1A1A01050161 and 2018K1A3A1A09078001) and the Ministry of Science and ICT, Korea, under an ITRC Program, IITP-2019-2018-0-01402, and by the South African Research Chair Initiative of the Department of Science and Technology and the National Research Foundation.

%\section*{\refname}
\bibliography{references}
\bibliographystyle{unsrt} 
\end{document}